\numberwithin{equation}{section}
\newcommand{\nocontentsline}[3]{}
\newcommand{\tocless}[2]{\bgroup\let\addcontentsline=\nocontentsline#1{#2}\egroup}
\newtheorem{theorem}{Theorem}
\newtheorem{proposition}{Proposition}
\theoremstyle{definition}
\newtheorem{definition}{Definition}
\theoremstyle{remark}
\newtheorem{remark}{Remark}[section]
\newcommand{\bs}[1]{\boldsymbol{#1}}
\newcommand{\wh}[1]{\widehat{#1}}
\def\p{{\partial}}
\def\rmd{{\color{red}{\rm d}}}
\DeclareFontFamily{U}{MnSymbolC}{}
\DeclareSymbolFont{MnSyC}{U}{MnSymbolC}{m}{n}
\DeclareFontShape{U}{MnSymbolC}{m}{n}{
    <-6>  MnSymbolC5
   <6-7>  MnSymbolC6
   <7-8>  MnSymbolC7
   <8-9>  MnSymbolC8
   <9-10> MnSymbolC9
  <10-12> MnSymbolC10
  <12->   MnSymbolC12}{}
\DeclareMathSymbol{\intprod}{\mathbin}{MnSyC}{'270}
\begin{document}
	\title{A structure preserving stochastic perturbation of classical water wave theory}
	\author[1]{Oliver D. Street\corref{cor1}}
	\ead{o.street18@imperial.ac.uk}
	\cortext[cor1]{Corresponding author}
	\affiliation[1]{organization={Imperial College London}, addressline={Exhibition Rd, South Kensington},
postcode={SW7 2BX}, city={London}, country={United Kingdom}}
	\date{\today}
	
%	\begin{center}
%	\emph{Pick a quote!}
%	\end{center}
	\begin{abstract}
        		The inclusion of stochastic terms in equations of motion for fluid problems enables a statistical representation of processes which are left unresolved by numerical computation. Here, we derive stochastic equations for the behaviour of surface gravity waves using an approach which is designed to preserve the geometric structure of the equations of fluid motion beneath the surface. In doing so, we find a stochastic equation for the evolution of a velocity potential and, more significantly, demonstrate that the stochastic equations for water wave dynamics have a Hamiltonian structure which mirrors that found by Zakharov for the deterministic theory. This involves a perturbation of the velocity field which, unlike the deterministic velocity, need not be irrotational for the problem to close.
	\end{abstract}
	
	\begin{keyword}
	water waves \sep stochastic geometric mechanics \sep Hamiltonian mechanics \sep stochastic partial differential equations
	\end{keyword}
	
	\maketitle

\tableofcontents

\section{Introduction}

Due to its immediate visibility in the natural world, the free upper surface of a body of water has long attracted attention from leading thinkers. The inherent proximity of human settlements to water has ensured that its behaviour has been culturally ingrained across the world, and as such has been a mainstay of scientific and artistic consideration. Describing the behaviour of waves and currents is surprisingly challenging, since there is an abundance of nonlinear features and interactions. Following John Scott Russell’s now famous observation of a `wave of translation’ on the Union Canal, there has been an interest in developing an understanding of the behaviour of such waves from the perspective of hydrodynamics. The wave observed by Scott Russell was propagating sufficiently fast to ensure that his horse could not keep up, yet the Union Canal is not known for a rushing current. It was therefore well understood that disturbances on the free surface can be distinct from material transport and have their own behaviour. The behaviour of surface gravity waves have consequently been a standard problem in mathematics for the best part of two centuries.

When the fluid motion beneath the free surface is taken to be irrotational, a closed system of equations may be found for the free surface elevation and the trace of the velocity potential on the surface. As was shown by Zakharov \cite{Zakharov1968}, this system is Hamiltonian and, as noted by Craig and Sulem \cite{Craig1993}, can be expressed in terms of the Dirichlet to Neumann operator.

When modelling fluid problems, the incorporation of stochasticity into the model equations permits a representation of uncertainty. This has been motivated by applications to climate and geophysical problems, where large scale numerical models are limited by computational resources. Uncertainty can be inherited from a number of sources such as inexact observations, unresolvable physical processes, suboptimal model fitting, or subgridscale motions. Consequently, approaches to deriving stochastic equations of motion have been introduced \cite{Holm2015,Memin2014} to provide a statistical description of this uncertainty. These approaches have proven able to parametrise subgridscale dynamics in numerical experiments \cite{Cotter2019,Cotter2020a,Cotter2020b}.

One such methodology for incorporating stochasticity into model equations for fluid problems is \emph{stochastic advection by Lie transport} \cite{Holm2015}, and is designed to preserve the variational structure of the corresponding deterministic equation. Beginning with an assumption that the fluid beneath a free surface is governed by a fluid equation with stochastic advection by Lie transport, in this paper we will derive stochastic equations for free surface motion. As is the case for geophysical flows, uncertainty in inherent to hydrodynamic descriptions of wave motion. Waves in nature are forced in a multitude of ways, the inclusion of all such sources of wave activity in a model is unreasonable. Moreover, irrotational flow is assumed in the derivation of classical water wave theory, whilst data is unlikely to support this approximation.

In this paper, it will be shown that the inclusion of structure preserving noise in the three dimensional fluid equations will also preserve the Hamiltonian structure of the free surface problem. In particular, we will derive stochastic equations for the free surface dynamics which have a Hamiltonian description in the style of Bismut \cite{Bismut1981}. As such, this paper demonstrates that the ability of stochastic parameterisations to preserve the variational structures found in the deterministic theory is greater than previously known. Note that a stochastic perturbation of the water wave theory has recently appeared in the literature \cite{Dinvay2022}. Our approach here differs in that we will begin with a structure preserving approach to the addition of noise in the underlying fluid model. Thus, rather than fitting a Hamiltonian structure to a stochastic equation, this approach will have a variational structure by design.

The reader should note that, at each point in this paper, the deterministic theory may be recovered exactly by setting the stochastic perturbation terms to zero. As such, we have a stochastic generalisation of the deterministic theory. It should also be noted that the findings presented in this paper can also be found within my doctoral thesis \cite{Street2022}.

\section{The Euler equations with transport noise}

As in the case for the deterministic theory, we will be beginning with an assumption that the fluid is governed by the Euler equations with vertical gravitational forcing. As noted by Arnold \cite{Arnold1966}, the Euler equations may be interpered as a geodesic flow on the manifold of volume preserving diffeomorphisms with respect to the right invariant kinetic energy metric. This observation would change the manner in which mathematicians consider continuum dynamics, and led to the development of an Euler-Poincar\'e theory for continuum dynamics on the semidirect product Lie algebra corresponding to the group of diffeomorphisms and the vector field of advected quantities \cite{Holm1998}. 

\paragraph{The variational structure of fluid motion.} For a fluid in a spatial domain $\mathcal{D} \in \mathbb{R}^3$, with coordinates denoted by $(x,y,z) = (\bs{r},z)$, a configuration is taken to be an element of the group of diffeomorphisms from $\mathcal{D}$ to itself, denoted by $G = {\rm Diff}(\mathcal{D})$. The corresponding Lie algebra to this group is the space of vector fields, $\mathfrak{g} = \mathfrak{X}(\mathcal{D})$. A vector field, $u\in \mathfrak{X}$, will be denoted by a bold character, $\bs{u}$, when written in terms of Euclidean coordinates, i.e. $u = \bs{u}\cdot\nabla$, where $\nabla = (\p_{x},\p_{y}, \p_{z})$. We have a vector space\footnote{We will denote this as the dual of a vector space, $V$, to be consistent with the literature \cite{Holm1998}.} of advected quantities, $V^*$, which is a representation space of ${\rm Diff}(\mathcal{D})$. The group acts linearly on tensor fields, $G\times V^* \mapsto V^*$, by pullback. The corresponding action of the vector fields on $V^*$ is the Lie derivative. Indeed, for a vector field, $u \in \mathfrak{X}(\mathcal{D})$, with corresponding flow $g_\epsilon$, we have
\begin{equation*}
	q \mapsto \mathcal{L}_u q = \frac{d}{d\epsilon}\bigg|_{\epsilon = 0} g^*_{\epsilon} q \in V^* \,,\quad\hbox{for}\quad g_{\epsilon}\in G \,,\ q \in V^* \,.
\end{equation*}
An advected quantity, $q\in V^*$, satisfies the following
\begin{equation}
	(\p_t + \mathcal{L}_u)q = 0 \,.
\end{equation}
As is explained fully in \cite{Holm1998}, an application of Hamilton's principle to an action, $S=\int \ell(u,q)\,dt$, corresponding to a right invariant Lagrangian, $\ell:\mathfrak{X}\times V^* \mapsto \mathbb{R}$, gives the following Euler-Poincar\'e equation on $\mathfrak{X} \times V^*$
\begin{equation}
	(\p_t + \mathcal{L}_u)\frac{\delta\ell}{\delta u} = \frac{\delta\ell}{\delta q}\diamond q \,,
\end{equation}
where the diamond operation is the dual of the Lie derivative when considered as a map $\mathcal{L}_{(\cdot)}a:\mathfrak{X}(\mathcal{D}) \mapsto V^*$, defined as
\begin{equation}
	\langle v \diamond a , u \rangle_{\mathfrak{X}^*\times\mathfrak{X}} = - \langle \mathcal{L}_u a , v \rangle_{V^*\times V} \,.
\end{equation}
From this description, an entire geometric structure emerges. It can be shown that the Euler equations for incompressible flow, with vertical gravitational forcing, correspond to the above Euler-Poincar\'e equations where the Lagrangian is taken to be
\begin{equation}\label{eqn:EulerActionDeterministic}
	\ell(u,D) = \int_{\mathcal{D}}\frac{D}{2}|\bs{u}|^2 - gz - p(D-1) \,d^3x\,,
\end{equation}
where $g$ is the acceleration due to gravity. 

\paragraph{Stochastic advection by Lie transport.} Following Holm \cite{Holm2015}, stochastic equations of motion can be derived by making an assumption that advection occurs with respect to a vector field described as the sum of a drift velocity and stochastic processes integrated in the Fisk-Stratonovich sense. Indeed, an advected quantity satisfies
\begin{equation}\label{eqn:advection}
	(\rmd + \mathcal{L}_{\rmd x_t})q = 0 \,,\quad\hbox{where}\quad \rmd x_t = u(x,t)\,dt + \sum_i \xi_i(x)\circ dW_t^i \,,
\end{equation}
and $W_t^i$ are independent Brownian motions. A stochastic analogue of the geometric structure for deterministic fluid theories can be derived from this assumption, where features such as the Kelvin-Noether circulation theorem are preserved by the addition of noise. This variational principle through which the equations are derived becomes stochastic, in the sense that the time evolution is defined with respect to stochastic processes. For such a stochastic, or \emph{semimartingale driven}, action integral, it has been shown that the fundamental lemma of the calculus of variations applies \cite{Street2021}.

Within this framework, the stochastic Euler-Poincar\'e equations for incompressible fluid motion have been derived \cite{Street2021}. During this process, the pressure appearing as a Lagrange multiplier in the Lagrangian \eqref{eqn:EulerActionDeterministic} becomes a stochastic Lagrange multiplier, $\rmd \pi$, and the equations of motion are
\begin{align}
	\rmd\bs{u} + \bs{u}\cdot\nabla\bs{u}\,dt + \sum_{i=1}^\infty\bigg(\bs{\xi}_i\cdot\nabla\bs{u}+ \sum_{j=1}^3 u_j\nabla\xi^j_i\bigg)\circ dW_t^i +\bs{g}\,dt &= - \nabla\rmd\pi \label{EulerHolm}\,, \\
	\nabla\cdot\rmd\bs{x}_t &=0 \label{DxIncompressible}\,,
\end{align}
where $\bs{g} = (0,0,g)$ and we have denoted the components of $\bs{\xi}_i$ as $(\xi_i^1,\xi_i^2,\xi_i^3)$.

\paragraph{Potential flow.}

In classical water wave theory, the equations for the dynamics of the free surface are closed by assuming that the dynamics of the interior of the fluid is governed by the homogeneous Euler equations under the additional assumption that the flow is \emph{irrotational}. This translates mathematically to an assumption that the curl of the (three dimensional) velocity field is zero, $\nabla\times\bs{u}=0$. If we assume further that the spatial domain is simply connected, then the velocity field is \emph{conservative}. We thus have the existence of the velocity potential, $\phi$, which is defined as the potential corresponding to the velocity field, and the incompressibility constraint implies that this satisfies Laplace's equation
\begin{equation}\label{eqn:PotentialFlow}
	\bs{u} = \nabla\phi \quad\implies\quad \Delta \phi = 0 \,.
\end{equation}

In the stochastic equations of fluid motion, the perturbations, $\bs{\xi}_i$, mirror the structure of the deterministic velocity field in that thieir divergence is zero. In the case of irrotational fluids we will, for now, assume that the velocity field, $\bs{u} = \nabla\phi$, is irrotational and the perturbations are only incompressible. The Euler momentum equation becomes
\begin{equation*}
\begin{aligned}
	\rmd\nabla\phi + (\nabla\phi\cdot\nabla)\nabla\phi\,dt + \sum_{i=1}^\infty \bigg((\bs{\xi}_i\cdot\nabla)\nabla\phi &+ \sum_{j=1}^3 (\p_j\phi)\nabla\xi^j_i\bigg)\circ dW_t^i \\
	&\hspace{40pt}+\bs{g}\,dt + \nabla\rmd\pi = 0 \,,
\end{aligned}
\end{equation*}
where by the sum over the derivatives $\p_j$ we mean a sum over $\{\p_x,\p_y,\p_z\}$. Recall that the nonlinearity simplifies into a gradient term
\begin{equation*}
	 (\nabla\phi\cdot\nabla)\nabla\phi = \frac12 \nabla(|\nabla\phi|^2)\,,
\end{equation*}
but it is not immediately obvious that the same is true for each nonlinear stochastic term. In order to simplify these terms, we consider them in the coordinate free language of exterior calculus. We notice that each stochastic term corresponds to a Lie derivative of $u^\flat = \bs{u}\cdot d\bs{x}$ with respect to the vector field $\xi_i = \bs{\xi}_i\cdot\nabla$. Using Cartan's formula, may relate this to the interior product by
\begin{equation}
	\mathcal{L}_{\xi_i}u^\flat = \xi_i \intprod du^\flat + d(\xi_i \intprod u^\flat ) \,.
\end{equation}
Since we have a potential flow, the vector field $u \in \mathfrak{X}$ is related to its potential $\phi$ by
\begin{equation*}
	u = (d\phi)^\sharp \,.
\end{equation*}
The Lie derivative then becomes
\begin{equation}
\begin{aligned}
	\mathcal{L}_{\xi_i}((d\phi)^\sharp)^\flat &= \mathcal{L}_{\xi_i} d\phi \\
	&= {\xi_i} \intprod d^2\phi + d({\xi_i} \intprod d\phi ) \\
	&= d({\xi_i} \intprod d\phi ) \,,\quad\hbox{since}\quad d^2\phi = 0 \,.
\end{aligned}
\end{equation}
Returning to Euclidean coordinates, we see that this corresponds to
\begin{equation}
	(\bs{\xi}_i\cdot\nabla)\nabla\phi + \sum_{j=1}^3 (\p_j\phi)\nabla\xi_i^j = \nabla(\bs{\xi}_i\cdot \nabla\phi) \,,
\end{equation}
where the left hand side is the Lie derivative of a $1$-form and the right hand side is the exterior derivative of the interior product between a vector field, $\xi_i$, and a $1$-form, $u^\flat$, associated to another vector field, $u$, through the musical isomorphism $\flat$. Whilst this calculation follows immediately from Cartan's formula in exterior calculus, it may also be performed, with some difficulty, in Euclidean coordinates using vector calculus. It is also a consequence of the fact that the Lie derivative commutes with the exterior derivative.

As a result of this calculation, we have that
%\begin{equation*}
%	\rmd\nabla\phi + \frac12\nabla(|\nabla\phi|^2)\,dt + \sum_{i=1}^\infty \nabla(\bs{\xi}_i\cdot \nabla\phi)\circ dW_t^i + \bs{g}\,dt + \nabla\rmd\pi = 0 \,.
%\end{equation*}
\begin{equation}\label{eqn:stochastic_Bernoulli}
	\rmd\phi + \frac12|\nabla\phi|^2\,dt + \sum_{i=1}^\infty \bs{\xi}_i\cdot \nabla\phi\circ dW_t^i + gz\,dt  + \rmd\pi = 0 \,,
\end{equation}
where it should be noted that the stochastic term is the Lie derivative of the scalar velocity potential along the vector field $\xi_i$.

\section{A free surface and the stochastic kinematic boundary condition}

A \emph{free boundary} problem may be formulated by assuming that our three dimensional spatial domain has an upper boundary, $z=\zeta(\bs{r},t)$, which is a function of time and space. This will be considered as a boundary condition on the fluid equations \eqref{EulerHolm}-\eqref{DxIncompressible} derived from the above variational principle, though it is worth noting that it is possible to embed such conditions into the variational principle itself \cite{Cotter2010}. We now introduce a convenient notation with which we will be able to cleanly represent conditions on the free boundary.
\begin{definition}[Evaluation on a free surface]\label{HatDefinition}
	The evaluation of a time dependent object $f(\bs{x},t) = f(\bs{r},z,t)$, which depends on all three spatial coordinates, on the free surface, $z=\zeta(\bs{r},t)$, is an object which is independent of the vertical coordinate, $z$, and is denoted by the following
	\begin{equation}
		\wh{f}(\bs{r},t) \coloneqq f(\bs{r}, \zeta(\bs{r},t),t)\,.
	\end{equation}
\end{definition}

\begin{definition}[Evaluation on a free surface as a pullback]\label{HatDefinitionPullback}
	The evaluation of a variable, $f(\bs{x},t)$, on the free surface defined in Definition \ref{HatDefinition} may be written in terms of the pullback by a time dependent function $Z_t:\mathbb{R}^3\mapsto\mathbb{R}^3$ as
	\begin{equation}
		Z_t^*f = (f\circ Z_t)(x,y,z) = \wh f \,,
	\end{equation}
	where $Z_t$ is defined by
	\begin{equation}
		Z_t(x,y,z) = (x,y,\zeta(t,x,y)) \,.	
	\end{equation}
\end{definition}

The kinematic boundary condition governs the dynamic response of the free surface to the velocity field. Namely, the free boundary moves with a velocity normal to the surface. This will be illustrated with the following vector, $\bs{n}$, is normal to the surface\index{normal vector}
\begin{equation}\label{eqn:normal}
	\bs{n} = \begin{pmatrix}
		-\nabla_{\bs{r}}\zeta \\
		1
	\end{pmatrix}
	\,,
\end{equation}
where $\nabla_{\bs{r}} \coloneqq (\p_x,\p_y)$ is the two dimensional gradient operator in the horizontal plane. In order to define the kinematic boundary corresponding to the stochastic equations of motion where advection is defined to be stochastic, as in equation \eqref{eqn:advection}, we will define the kinematic boundary condition in terms of advection.
\begin{definition}[The kinematic boundary condition]\index{kinematic boundary condition!deterministic}\label{def:KBC}
	The kinematic boundary condition states that a particle on the free surface remains on the free surface. This is described mathematically as
	\begin{equation*}
		(\rmd + \rmd\bs{x}_t\cdot\nabla)(z-\zeta) = 0\,,\quad\hbox{on}\quad z=\zeta\,.
	\end{equation*}
\end{definition}
In order to reinterpret this definition, we decompose the velocity field and stochastic terms into two dimensional horizontal and one dimensional vertical components
\begin{equation}
	\rmd\bs{x}_t = \begin{pmatrix} \bs{v} \\ w \end{pmatrix}\,dt + \sum_{i=1}^\infty \begin{pmatrix} \bs{\xi}_i^{(\bs{r})} \\ \xi_i^{(z)} \end{pmatrix}\circ dW_t^i =: \begin{pmatrix} \rmd\bs{r}_t \\ \rmd z_t \end{pmatrix} \,,
\end{equation}
where we have denoted the components of the perturbations as $\bs{\xi}_i = ( \bs{\xi}_i^{(\bs{r})} , \xi_i^{(z)})$. The kinematic boundary condition is therefore
\begin{equation}\label{eqn:stochastic_KBC_x}
	(\rmd + \rmd\bs{r}_t\cdot\nabla)\zeta = \rmd z_t \,, \quad\hbox{on}\quad z=\zeta\,,
\end{equation}
or, equivalently,
\begin{equation}\label{eqn:stochastic_KBC}
	\rmd\zeta = \bs{\wh u}\cdot\bs{n}\,dt + \sum_{i=1}^\infty \wh{\bs{\xi}_i}\cdot\bs{n}\circ dW_t^i \,.
\end{equation}
This last form of the kinematic boundary condition is a sensible statement on how the rate of change of the free surface relates to the velocity. When the velocity field is given by a potential flow, as in equation \eqref{eqn:PotentialFlow}, the kinematic boundary condition may be written as
\begin{equation}\label{eqn:stochastic_KBC_potential}
	\rmd\zeta = \wh{\p_z\phi}\,dt - \wh{\nabla_{\bs{r}}\phi}\cdot\nabla_{\bs{r}}\zeta\,dt + \sum_{i=1}^\infty \big( \wh{\xi_i^{(z)}} - \wh{\bs{\xi}_i^{(\bs{r})}}\cdot\nabla_{\bs{r}}\zeta \big)\circ dW_t^i \,.
\end{equation}

\section{The stochastic classical water wave equations (CWWE)}

The Euler equations with gravitational forcing, augmented with the kinematic boundary condition, are a complete three dimensional fluid theory, when closed with the addition of a dynamic boundary condition on the pressure. Should we wish to consider the dynamics of the free surface itself, without solving for the velocity field in the entirety of the fluid, the potential flow assumption allows us to derive a closed system of equations of variables evaluated on the free surface. The \emph{classical water wave equations} are a pair of boundary equations for the free surface and the trace of the velocity potential on the surface. Here, we will derive such equations from the stochastic Euler equations \eqref{EulerHolm}-\eqref{DxIncompressible} and the kinematic boundary condition \eqref{eqn:stochastic_KBC_x}. This involves the evaluation of equations onto the free surface, and exchanging the order of differentiation and evaluation on the surface.

\begin{proposition}\label{HatChainRuleProposition}
	The difference between exchanging the order of spatial differentiation and evaluation on the free surface is as follows,
	\begin{align*}
		\nabla_{\bs{r}}\wh f - \wh{\nabla_{\bs{r}}f} &= \wh{\p_zf}\nabla_{\bs{r}}\zeta \,,\\
		\nabla_{\bs{r}}\cdot\bs{\wh f} - \wh{\nabla_{\bs{r}}\cdot\bs{f}} &= \wh{\p_z\bs{f}}\cdot\nabla_{\bs{r}}\zeta \,,
	\end{align*}
	where $f(\bs{x},t)$ and $\bs{f}(\bs{x},t) = (f_1(\bs{x},t),f_2(\bs{x},t))$ are used to denote arbitrary variables with values in $\mathbb{R}$ and $\mathbb{R}^2$ respectively.
\end{proposition}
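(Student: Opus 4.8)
The plan is to prove the first identity by a direct application of the multivariable chain rule, and then to obtain the second identity by applying the first componentwise. The essential point is that in the evaluated quantity $\wh f(\bs r,t) = f(\bs r, \zeta(\bs r,t),t)$ each horizontal coordinate appears twice: once explicitly in the first two slots of $f$, and once implicitly through $\zeta$ in the vertical slot.

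First I would fix a horizontal coordinate, say $x$, and differentiate $\wh f(\bs r,t) = f(\bs r,\zeta(\bs r,t),t)$ with respect to it. The chain rule produces two contributions: the derivative acting on the explicit $x$-dependence, which evaluates to $\wh{\p_x f}$, and the derivative acting through the vertical slot, which is $\wh{\p_z f}\,\p_x\zeta$. Repeating for $y$ and assembling the two components into the operator $\nabla_{\bs r} = (\p_x,\p_y)$ yields $\nabla_{\bs r}\wh f = \wh{\nabla_{\bs r} f} + \wh{\p_z f}\,\nabla_{\bs r}\zeta$, which rearranges to the stated formula.

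Alternatively, and more in keeping with the geometric spirit of the paper, I would exploit the pullback characterisation of Definition \ref{HatDefinitionPullback}. Since $\wh f = Z_t^* f$ and pullback commutes with the spatial exterior derivative, $d\wh f = Z_t^*(df)$. Computing $Z_t^*$ on the basis one-forms gives $Z_t^* dx = dx$, $Z_t^* dy = dy$, and $Z_t^* dz = \p_x\zeta\,dx + \p_y\zeta\,dy$ (the third component of $Z_t$ being $\zeta$), so that the horizontal part of $Z_t^*(df)$ reproduces the same correction term. Matching the coefficients of $dx$ and $dy$ recovers the first identity and exhibits it as a consequence of the functoriality of $d$ under pullback.

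For the second identity I would simply apply the first to each component $f_1,f_2$ of the horizontal vector field $\bs f$, writing $\p_x\wh{f_1} = \wh{\p_x f_1} + \wh{\p_z f_1}\,\p_x\zeta$ and likewise for $f_2$, and then summing. The explicit terms combine to $\wh{\nabla_{\bs r}\cdot\bs f}$ while the correction terms combine to $\wh{\p_z\bs f}\cdot\nabla_{\bs r}\zeta$. There is no genuine obstacle in this argument; the only point demanding care is the bookkeeping, namely tracking which differentiation acts before and which after the surface evaluation, so that the correction $\wh{\p_z f}\,\nabla_{\bs r}\zeta$ is neither omitted nor double counted.
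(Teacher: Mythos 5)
Your primary argument is correct and is essentially the paper's own proof: the first identity by the multivariable chain rule applied to $\wh f(\bs{r},t) = f(\bs{r},\zeta(\bs{r},t),t)$, and the second by applying that computation componentwise to $f_1$ and $f_2$ and summing. Your alternative pullback derivation, using $d\wh f = Z_t^*(df)$ and the action of $Z_t^*$ on basis one-forms, is a valid bonus that makes rigorous the observation the paper itself records only informally in the remark following the proposition.
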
 
\begin{remark}
	The equations above are a comment on exchanging the order of differentiation and pullback, when the evaluation on the free surface is interpreted as in Definition \ref{HatDefinitionPullback}. Using the notation from this definition, the equations are equivalent to
	\begin{align*}
		\nabla_{\bs{r}}(Z^*_tf) - Z^*_t(\nabla_{\bs{r}}f) &= (Z^*_t\p_zf )\nabla_{\bs{r}}\zeta \,,\\
		\nabla_{\bs{r}}\cdot (Z_t^*\bs{f}) - Z_t^*(\nabla_{\bs{r}}\cdot\bs{f}) &= (Z_t^*\p_z\bs{f})\cdot\nabla_{\bs{r}}\zeta \,.
	\end{align*}
\end{remark}
\begin{proof}
	The proof of the first identity follows immediately by the chain rule, viz.
	\begin{align*}
		\nabla_{\bs{r}}\wh f &= \nabla_{\bs{r}} f(\bs{r},\zeta(\bs{r},t),t) \\
		&= \left[ \nabla_{\bs{r}} f(\bs{x},t) \right]\big|_{z=\zeta} + \wh{\p_z f} \nabla_{\bs{r}}\zeta =  \wh{\nabla_{\bs{r}} f} + \wh{\p_z f} \nabla_{\bs{r}}\zeta \,.
	\end{align*}
	The second identity follows applying this method twice, as follows
	\begin{align*}
		\nabla_{\bs{r}}\cdot\bs{\wh f} &= \p_x\wh f_1 + \p_y\wh f_2 \\
		&= \wh{\p_x f_1} + \wh{\p_z f_1} \p_x\zeta  + \wh{\p_y f_2} + \wh{\p_z f_2} \p_y\zeta = \wh{\nabla_{\bs{r}}\cdot\bs{f}} + \wh{\p_z\bs{f}}\cdot\nabla_{\bs{r}}\zeta \,.
	\end{align*}
\end{proof}

In the case of deterministic equations of motion, an analogous relationship exists for time derivatives
\begin{equation*}
\p_t\wh f - \wh{\p_t f} = \wh{\p_zf}\p_t\zeta \,.
\end{equation*}
In the case where time evolution is given partially by stochastic integration, the proof of an analogous relationship is more involved. In particular, it involves the interpretation of the evaluation on the free surface as a pullback and an application of the stochastic Kunita-It\^o-Wentzell formula \cite{deLeon2020}. 
\begin{proposition}\label{prop:stochastic_evaluation}
	For a function, $f$, which satisfies an equation of the form
	\begin{equation*}
		\rmd f = F_0\,dt + \sum_i F_i\circ dW_t^i \,,
	\end{equation*}
	the evaluation of the function on the free surface, $\wh f$, satisfies
	\begin{equation}
		\rmd \wh f - \wh{\p_z f}\,\rmd \zeta = \wh{F_0}\,dt + \sum_i \wh{F_i}\circ dW_t^i  \,.
	\end{equation}
\end{proposition}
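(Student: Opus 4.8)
The plan is to interpret the evaluation $\wh f$ as the pullback $Z_t^* f$ in the sense of Definition \ref{HatDefinitionPullback}, and then to apply the stochastic Kunita-It\^o-Wentzell formula \cite{deLeon2020} to the composition of the random field $f$ with the semimartingale $Z_t$. Since $Z_t(x,y,z) = (x,y,\zeta(\bs{r},t))$ acts as the identity on the horizontal coordinates and substitutes $z\mapsto\zeta(\bs{r},t)$ in the vertical, I would fix the horizontal coordinate $\bs{r}$ and regard the object as the composition of the field $f(\bs{r},\cdot\,,t)$, viewed as a function of the single spatial variable $z$ and of time, with the real-valued semimartingale $t\mapsto\zeta(\bs{r},t)$. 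The field $f$ carries the prescribed decomposition $\rmd f = F_0\,dt + \sum_i F_i\circ dW_t^i$, while the substituted variable $\zeta$ is itself a semimartingale whose Stratonovich differential is supplied by the kinematic boundary condition \eqref{eqn:stochastic_KBC}.

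First I would observe that, in the deterministic setting, the chain rule gives $\p_t\wh f = \wh{\p_t f} + \wh{\p_z f}\,\p_t\zeta$, so the assertion amounts to showing that the Stratonovich differential obeys the same chain-rule form with no spurious correction terms, namely $\rmd\wh f = \wh{F_0}\,dt + \sum_i\wh{F_i}\circ dW_t^i + \wh{\p_z f}\,\rmd\zeta$; rearranging this identity is precisely the claim. The Kunita-It\^o-Wentzell formula is the correct instrument because it governs exactly this situation, in which a random field that is itself a semimartingale is evaluated along a semimartingale path, and its virtue in Stratonovich form is that the additional cross-variation contributions collapse and return the ordinary chain rule.

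The main obstacle will be controlling the correlation between the noise driving $f$ and the noise driving $\zeta$, since both the terms $F_i\circ dW_t^i$ and the diffusion of $\zeta$ in \eqref{eqn:stochastic_KBC} are driven by the same family of Brownian motions $W_t^i$. In It\^o form this correlation generates a cross-variation term of the form $\sum_i\wh{\p_z F_i}\,d\langle W^i,\zeta\rangle$ together with a second-order term $\frac12\wh{\p_z^2 f}\,d\langle\zeta\rangle$, and the substance of the argument is to verify that these are exactly the corrections produced by the Stratonovich-to-It\^o conversions of $\wh{\p_z f}\,\rmd\zeta$ and of $\sum_i\wh{F_i}\circ dW_t^i$, so that they cancel and the clean Stratonovich identity survives. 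I would therefore either quote the Stratonovich version of the Kunita-It\^o-Wentzell formula from \cite{deLeon2020} directly, or, should only the It\^o version be available, convert the field equation and the kinematic boundary condition into It\^o form, assemble all drift and covariation terms, and reconvert to Stratonovich to confirm that everything beyond $\wh{F_0}\,dt + \sum_i\wh{F_i}\circ dW_t^i + \wh{\p_z f}\,\rmd\zeta$ vanishes identically.
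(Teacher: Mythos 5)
Your proposal is correct and follows essentially the same route as the paper: interpret $\wh f = Z_t^*f$ and apply the Stratonovich Kunita--It\^o--Wentzell formula from \cite{deLeon2020}, with the differential of $\zeta$ supplied by the kinematic boundary condition, so that the Lie-derivative (directional-derivative) term yields exactly $\wh{\p_z f}\,\rmd\zeta$ with no correction terms. The paper simply quotes the Stratonovich version directly rather than carrying out your fallback It\^o-conversion bookkeeping, but your identification of where the cross-variation cancellations occur is precisely why that clean form holds.
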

\begin{remark}
	This is a stochastic generalisation of the relationship from Proposition \ref{HatChainRuleProposition} corresponding to the derivative in the time variable.
\end{remark}
\begin{proof}
Recall that the free surface, in the stochastic case, satisfies equation \eqref{eqn:stochastic_KBC} and $Z$ therefore satisfies
\begin{equation*}
	\rmd Z_t = G_0\,dt + \sum_i G_i\circ dW_t^i =: \begin{pmatrix} 0 \\ 0 \\ \bs{\wh v}\cdot\bs{n} \end{pmatrix}\,dt + \sum_i \begin{pmatrix} 0 \\ 0 \\ \wh{\bs{\xi}_i}\cdot\bs{n} \end{pmatrix}\circ dW_t^i \,,
\end{equation*}
where the vector field $G=\bs{G}\cdot\nabla$ is defined through the above equation. The Kunita-It\^o-Wentzell formula is therefore
\begin{equation}
\begin{aligned}
	\rmd(Z^* f) &= Z^*F_0\,dt + \sum_{i=1}^\infty Z^*F_i\circ dW_t^i \\
	&\quad + Z^*\mathcal{L}_{G_0}f\,dt + \sum_{i=1}^\infty Z^*\mathcal{L}_{G_i}f\circ dW_t^i \,.
\end{aligned}
\end{equation}
In this case, the Lie derivative of a function is a directional derivative, and hence
\begin{equation*}
\begin{aligned}
	\rmd(Z^* f) &= Z^*F_0\,dt + \sum_{i=1}^\infty Z^*F_i\circ dW_t^i \\
	&\quad + Z^*(\bs{\wh u}\cdot\bs{n} \,\p_z f)\,dt + \sum_{i=1}^\infty Z^*(\wh{\bs{\xi}_i}\cdot\bs{n}\,\p_z f)\circ dW_t^i \,.
\end{aligned}
\end{equation*}
Returning to the hat notation, we have that
\begin{equation}
	\rmd \wh f  = \wh{F_0}\,dt + \sum_i \wh{F_i}\circ dW_t^i + \wh{\p_z f}\,\rmd \zeta \,,
\end{equation}
and we have proven our claim.
\end{proof}

\paragraph{The momentum equation. } We may now take the relavent steps towards deriving the stochastically perturbed water wave equations, which are a generalisation of the standard deterministic theory of Zakharov \cite{Zakharov1968}. At each step of the calculations, a sanity check may be performed by taking the stochastic terms to be zero, which will revert the equations back to their deterministic counterparts. We first note that the pressure in the equation for the potential, $\rmd\pi$, will be assumed to be zero. In the deterministic theory, the assumption is made that the pressure is constant in time and space along the free surface. This is since atmospheric pressure is less variable than that within the fluid, and is taken to be constant. Since pressure must not be discontinuous across the surface, the pressure is taken to be constant on the free surface. In the stochastic generalisation, the pressure term has deterministic and stochastic contributions. In the language of a recent contribution on this matter \cite{Street2021}, the pressure is \emph{compatible with the driving semimartingale}. That is, there exist some functions $\{P_0,P_1,\dots\}$, such that $\rmd\pi = P_0\,dt + \sum P_i\circ dW_t^i$. Assuming that this pressure is zero can be interpreted as suggesting that deterministic part of the pressure inherits its structure from the deterministic assumption, and we have assumed that there is no perturbation around this. We therefore wish to evaluate the following equation for the potential onto the free surface
\begin{equation}
	\rmd\phi + \frac12|\nabla\phi|^2\,dt + \sum_{i=1}^\infty \bs{\xi}_i\cdot \nabla\phi\circ dW_t^i + gz\,dt = 0 \,.
\end{equation}
As an immediate consequence of Proposition \ref{prop:stochastic_evaluation}, we have 
\begin{equation*}
	\rmd\wh\phi - \wh{\p_z\phi}\,\rmd\zeta + \frac12|\wh{\nabla\phi}|^2\,dt + \sum_{i=1}^\infty \wh{\bs{\xi}_i}\cdot\wh{\nabla\phi}\circ dW_t^i + g\zeta\,dt = 0 \,.
\end{equation*}
Substituting in the kinematic boundary condition equation \eqref{eqn:stochastic_KBC_potential}, written in terms of the velocity potential, we have
\begin{equation*}
\begin{aligned}
	\rmd\wh\phi &- \wh{\p_z\phi}\left( \wh{\p_z\phi}\,dt - \wh{\nabla_{\bs{r}}\phi}\cdot\nabla_{\bs{r}}\zeta\,dt + \sum_{i=1}^\infty \big( \wh{\xi_i^{(z)}} - \wh{\bs{\xi}_i^{(\bs{r})}}\cdot\nabla_{\bs{r}}\zeta \big)\circ dW_t^i \right) \\
	&\hspace{120pt}+ \frac12|\wh{\nabla\phi}|^2\,dt + \sum_{i=1}^\infty \wh{\bs{\xi}_i}\cdot\wh{\nabla\phi} \circ dW_t^i + g\zeta\,dt = 0\,.
\end{aligned}
\end{equation*}
After cancellations we have
\begin{equation}\label{eqn:CWWE_Zakharov_stochastic}
\begin{aligned}
	\rmd\wh\phi + \frac12 |\wh{\nabla_{\bs{r}}\phi}|^2\,dt &- \frac12 \wh{\p_z\phi}^2\,dt + \wh{\p_z\phi}(\wh{\nabla_{\bs{r}}\phi}\cdot \nabla_{\bs{r}}\zeta) \,dt + g\zeta \,dt \\
	&+ \sum_{i=1}^\infty \left( \wh{\bs{\xi}_i^{(\bs{r})}}\cdot \wh{\nabla_{\bs{r}}\phi} + \wh{\p_z\phi}\big(\wh{\bs{\xi}_i^{(\bs{r})}}\cdot \nabla_{\bs{r}}\zeta\big) \right) \circ dW_t^i = 0 \,,
\end{aligned}
\end{equation}
and we see that the deterministic part of this equation is equivalent to the classical deterministic theory \cite{Zakharov1968}.

\paragraph{Stochastic classical water wave equations.} We have derived a pair of equations, \eqref{eqn:stochastic_KBC_potential} and \eqref{eqn:CWWE_Zakharov_stochastic}, for the free surface elevation, $\zeta$, and the trace of the potential on the surface, $\wh\phi$. We will refer to these equations as the \emph{stochastic classical water wave equations}, and their deterministic parts are exactly as in the standard theory \cite{Zakharov1968}.

\section{A variational structure for stochastic water waves}

It has been shown that $\zeta$ and $\wh\phi$ are canonically conjugate variables when the equations of motion are deterministic. In this section, we will show that a stochastic analogue of this exists. Note that, by design, the addition of stochasticity preserved the variational structure of continuum dynamics. The results presented here will demonstrate that this preservation extends further than previously acknowledged.

\subsection{A Hamiltonian formulation}\label{subsec:StochasticHamiltonianWaterWaves}

We claim that our stochastic equations, \eqref{eqn:stochastic_KBC_potential} and \eqref{eqn:CWWE_Zakharov_stochastic}, have a Hamiltonian formulation in the spirit of Bismut \cite{Bismut1981}. That is, there is a family of Hamiltonians\index{Hamiltonian}, $\{ H , H_1, H_2, \dots \}$, such that the equations \eqref{eqn:stochastic_KBC_potential} and \eqref{eqn:CWWE_Zakharov_stochastic} can be expressed as
	\begin{equation*}
		\rmd\zeta = \frac{\delta H}{\delta\wh\phi}\,dt + \sum_{i=1}^\infty \frac{\delta H_i}{\delta\wh\phi}\circ dW_t^i \,,\quad\hbox{and}\quad
		\rmd\wh\phi = - \frac{\delta H}{\delta\zeta}\,dt - \sum_{i=1}^\infty \frac{\delta H_i}{\delta\zeta}\circ dW_t^i \,.
	\end{equation*}
As we will substantiate with the proof of Theorem \ref{thm:StochasticHamiltonianCWWE}, these Hamiltonians are given by
\begin{align*}
	H &= \int \int_{-\infty}^{\zeta} \frac12|\nabla\phi|^2 \,dz\,d^2r +\frac12g \int\zeta^2\,d^2r 
	\,,\\
	H_i &= \int \int_{-\infty}^{\zeta} \bs{\xi}_i\cdot\nabla\phi \,dz\,d^2r \,.
\end{align*}

To demonstrate that this is true, we must first consider how to take variations of Hamiltonians of this form. 

\begin{remark}
The key feature which complicates the calculation of variational derivatives of the above Hamiltonians is the fact that a variation of the free surface elevation deforms the potential yet, for $\zeta$ and $\wh\phi$ to be canonically conjugate variables, we wish to keep one constant whilst taking variations with respect to the other. By the definition of evaluation of the potential on the free surface, $\wh\phi = \phi(\bs{r},\zeta(\bs{r},t))$, it is evident that the variation in $\zeta$ will induce a variation in $\wh\phi$, and it can be proposed that the form of this variation is $(\p\phi/\p z)\delta\zeta$. This will be explored in the following proposition.
\end{remark}

\begin{proposition}\label{prop:ZakharovVariationCorollary}
	When considering variations of the water wave Hamiltonians with respect to the free surface $\zeta$, we must also vary the potential according to
	\begin{equation}\label{eqn:ZakharovWeirdVariation}
		\delta\phi = - \frac{\p\phi}{\p z}\delta \zeta \,,\quad\hbox{on}\quad z=\zeta \,,
	\end{equation}
	in order to ensure that the canonically conjugate variable, $\wh\phi$, is untouched by the variation in $\zeta$. 
\end{proposition}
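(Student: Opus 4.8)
The plan is to track carefully what it means to vary $\zeta$ while holding the canonical variable $\wh\phi$ fixed, and to deduce the induced variation of the interior potential $\phi$ from the defining relation $\wh\phi(\bs r,t) = \phi(\bs r, \zeta(\bs r,t),t)$. The key conceptual point, flagged in the preceding remark, is that $\zeta$ and $\wh\phi$ are the independent field variables of the Hamiltonian description, whereas $\phi$ is a dependent quantity: given $\wh\phi$ and $\zeta$, the interior potential $\phi$ is determined (as the harmonic extension solving $\Delta\phi=0$ with the appropriate boundary data). Thus a variation $\delta\zeta$ of the surface, performed at fixed $\wh\phi$, necessarily forces a compensating variation $\delta\phi$ of the interior field so that the trace $\wh\phi$ is left unchanged.

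First I would write down the first variation of $\wh\phi$ induced by simultaneous variations $\delta\zeta$ and $\delta\phi$. Differentiating the identity $\wh\phi = \phi(\bs r,\zeta(\bs r,t),t)$ and using the chain rule exactly as in Proposition \ref{HatChainRuleProposition}, one obtains
\begin{equation*}
	\delta\wh\phi = \wh{\delta\phi} + \wh{\p_z\phi}\,\delta\zeta \,,
\end{equation*}
where $\wh{\delta\phi}$ denotes the trace of the interior variation evaluated on $z=\zeta$. The requirement that the canonically conjugate variable be untouched by the variation in $\zeta$ is precisely the condition $\delta\wh\phi = 0$. Imposing this and rearranging gives
\begin{equation*}
	\wh{\delta\phi} = - \wh{\p_z\phi}\,\delta\zeta \,,
\end{equation*}
which is exactly \eqref{eqn:ZakharovWeirdVariation} once restricted to the surface $z=\zeta$. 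This is the entire content of the claim, so the proof is essentially this one-line computation made precise.

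The main obstacle is conceptual rather than computational: one must be explicit about which quantity is held fixed and why the relation $\delta\phi = -(\p_z\phi)\,\delta\zeta$ is imposed \emph{on the surface} only, rather than throughout the interior. The point is that $\delta\phi$ as a field in the bulk is not arbitrary—it is constrained to remain harmonic—but the proposition only asserts the value of its trace on $z=\zeta$, which is all that is needed when subsequently differentiating Hamiltonians expressed as integrals $\int\int_{-\infty}^{\zeta}(\cdots)\,dz\,d^2r$. I would therefore take care to phrase the conclusion as a boundary condition on the admissible variation $\delta\phi$, consistent with the harmonic extension being otherwise determined by the fixed data $\wh\phi$. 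With this understood, the result follows immediately from the chain-rule identity of Proposition \ref{HatChainRuleProposition} applied in the variational direction, together with the defining constraint $\delta\wh\phi=0$.
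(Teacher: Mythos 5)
Your proposal is correct and takes essentially the same approach as the paper: the paper's proof is precisely the first-order Taylor expansion of $\phi_\epsilon(\bs{r},\zeta_\epsilon)$ in $\epsilon$ under the constraint $\phi_\epsilon(\bs{r},\zeta_\epsilon)=\wh\phi$ for all $\epsilon$, which yields your identity $\delta\wh\phi = \wh{\delta\phi} + \wh{\p_z\phi}\,\delta\zeta$ and then \eqref{eqn:ZakharovWeirdVariation} upon setting $\delta\wh\phi=0$. Your framing of the conclusion as a condition on the trace of $\delta\phi$ at $z=\zeta$ only is likewise consistent with how the proposition is used in the proof of Theorem \ref{thm:StochasticHamiltonianCWWE}.
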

\begin{proof}
	The aim is to vary $\phi$ and $\zeta$ in such a manner that $Z^*_t\phi = \wh\phi$ is constant. Considering $\phi_\epsilon(\bs{r},z) = \phi(\bs{r},z) + \epsilon\delta\phi(\bs{r},z)$ and $\zeta_\epsilon(\bs{r}) = \zeta(\bs{r}) + \epsilon\delta\zeta(\bs{r})$, where the time dependence is not explicitly notated for brevity, then the composition $\phi_\epsilon(\bs{r},\zeta_\epsilon) = Z^*_{t,\epsilon}\phi_\epsilon$ must be such that
	\begin{equation}\label{eqn:phi_untouched}
		\phi_\epsilon(\bs{r},\zeta_\epsilon) = \phi(\bs{r},\zeta) = \wh\phi \,,\quad\hbox{for each}\quad \epsilon \,.
	\end{equation}
	This is equivalent to the assertion that the canonically conjugate variable, $\wh\phi$, should not be altered by the variation of $\zeta$. We can determine the form of the variation in $\wh\phi$ induced by that of $\zeta$ by considering the Taylor expansions of $\phi(\bs{r},\zeta_\epsilon)$ and $\delta\phi(\bs{r},\zeta)$ as follows
	\begin{align}
		\phi(\bs{r},\zeta_\epsilon) &= \phi(\bs{r},\zeta) + \epsilon \frac{d}{d\epsilon}\bigg|_{\epsilon=0}\phi(\bs{r},\zeta_\epsilon) + \mathcal{O}(\epsilon^2) \label{eqn:phi_expansion}\,,\\
		\delta\phi(\bs{r},\zeta_\epsilon) &= \delta\phi(\bs{r},\zeta) + \epsilon \frac{d}{d\epsilon}\bigg|_{\epsilon=0}\delta\phi(\bs{r},\zeta_\epsilon) + \mathcal{O}(\epsilon^2) \label{eqn:delta_phi_expansion}\,.
	\end{align}
The derivative in $\epsilon$ in equation \eqref{eqn:phi_expansion} is given by
	\begin{equation}\label{eqn:epsilon_derivative}
		\frac{d}{d\epsilon}\bigg|_{\epsilon=0}\phi(\bs{r},\zeta_\epsilon) = \frac{\p\phi}{\p z}(\bs{r},\zeta)\frac{d}{d\epsilon}\bigg|_{\epsilon=0}\zeta_\epsilon = \frac{\p\phi}{\p z}(\bs{r},\zeta)\delta\zeta(\bs{r}) \,.
	\end{equation}
Substituting equation \eqref{eqn:epsilon_derivative} into \eqref{eqn:phi_expansion}, and equations \eqref{eqn:phi_expansion}-\eqref{eqn:delta_phi_expansion} into \eqref{eqn:phi_untouched}, we have
	\begin{equation}\label{eqn:ZakharovWeirdVariationExpansion}
		\phi(\bs{r},\zeta) + \epsilon \bigg(\frac{\p\phi}{\p z}\bigg)(\bs{r},\zeta) \delta\zeta + \epsilon \delta\phi(\bs{r},\zeta) + \mathcal{O}(\epsilon^2) = \wh\phi \,,
	\end{equation}
	which implies equation \eqref{eqn:ZakharovWeirdVariation} and we have proven our claim. Note that equation \eqref{eqn:ZakharovWeirdVariationExpansion} illustrates that the first variation of $\wh\phi = Z_t^*\phi$, when both $\phi$ and $\zeta$ are varied, has a contribution from the variation in $\phi$ directly as well as a contribution from the variation in $\zeta$.
\end{proof}

This proposition may be used to prove that our stochastic classical water wave system is Hamiltonian.

\begin{theorem}\label{thm:StochasticHamiltonianCWWE}
	The equations \eqref{eqn:stochastic_KBC_potential} and \eqref{eqn:CWWE_Zakharov_stochastic} have a Hamiltonian structure
	\begin{align}
		\rmd\zeta &= \frac{\delta H}{\delta\wh\phi}\,dt + \sum_{i=1}^\infty \frac{\delta H_i}{\delta\wh\phi}\circ dW_t^i \,,\\
		\rmd\wh\phi &= - \frac{\delta H}{\delta\zeta}\,dt - \sum_{i=1}^\infty \frac{\delta H_i}{\delta\zeta}\circ dW_t^i \,,
	\end{align}
	where the family of Hamiltonians are given by
	\begin{align}
		H &= \int \int_{-\infty}^{\zeta} \frac12|\nabla\phi|^2 \,dz\,d^2r +\frac12g \int\zeta^2\,d^2r 
		\label{eqn:HamiltonianZakharov}\,,\\
		H_i &= \int \int_{-\infty}^{\zeta} \bs{\xi}_i\cdot\nabla\phi \,dz\,d^2r \,.\label{eqn:StochasticHamiltonianZakharov}
	\end{align}
\end{theorem}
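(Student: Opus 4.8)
The plan is to verify the claimed Hamiltonian structure by directly computing the four functional derivatives $\delta H/\delta\wh\phi$, $\delta H_i/\delta\wh\phi$, $\delta H/\delta\zeta$ and $\delta H_i/\delta\zeta$, and matching them against the drift and noise coefficients appearing in the kinematic boundary condition \eqref{eqn:stochastic_KBC_potential} and in the momentum equation \eqref{eqn:CWWE_Zakharov_stochastic}. Throughout, I would exploit the fact that $\phi$ is harmonic, $\Delta\phi=0$, and that the perturbations are divergence free, $\nabla\cdot\bs{\xi}_i=0$, so that the volume integrals defining the Hamiltonians collapse onto surface integrals after one integration by parts. The unit normal and the surface measure combine into the unnormalised normal $\bs{n}$ of \eqref{eqn:normal} against the flat horizontal measure, via $\hat{\bs{n}}\,dS = \bs{n}\,d^2r$.

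First I would treat the variations with respect to $\wh\phi$, holding $\zeta$ fixed. A variation $\delta\phi$ of the interior potential, harmonic and with boundary trace $\wh{\delta\phi}=\delta\wh\phi$, gives for the kinetic term $\delta H = \int\int_{-\infty}^{\zeta}\nabla\phi\cdot\nabla\delta\phi\,dz\,d^2r$; integrating by parts and discarding the term proportional to $\Delta\phi=0$ leaves the surface integral $\int\delta\wh\phi\,(\wh{\p_z\phi}-\wh{\nabla_{\bs{r}}\phi}\cdot\nabla_{\bs{r}}\zeta)\,d^2r$. Hence $\delta H/\delta\wh\phi = \wh{\p_z\phi}-\wh{\nabla_{\bs{r}}\phi}\cdot\nabla_{\bs{r}}\zeta$, which is exactly the $dt$ coefficient in \eqref{eqn:stochastic_KBC_potential}. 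An identical computation for $H_i$, now using $\nabla\cdot\bs{\xi}_i=0$ in place of $\Delta\phi=0$, yields $\delta H_i/\delta\wh\phi = \wh{\bs{\xi}_i}\cdot\bs{n} = \wh{\xi_i^{(z)}}-\wh{\bs{\xi}_i^{(\bs{r})}}\cdot\nabla_{\bs{r}}\zeta$, matching the noise coefficient in \eqref{eqn:stochastic_KBC_potential}. This establishes the equation for $\rmd\zeta$.

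The substantive step is the variation with respect to $\zeta$, where Proposition \ref{prop:ZakharovVariationCorollary} is essential. Here each Hamiltonian receives two contributions: one from the moving upper limit of the $z$-integration, producing the integrand evaluated on the surface, and one from the induced variation of the interior potential, which by \eqref{eqn:ZakharovWeirdVariation} has boundary trace $\wh{\delta\phi}=-\wh{\p_z\phi}\,\delta\zeta$. For the kinetic term these combine, after integration by parts, into $\int\big[\tfrac12|\wh{\nabla\phi}|^2 - \wh{\p_z\phi}(\wh{\p_z\phi}-\wh{\nabla_{\bs{r}}\phi}\cdot\nabla_{\bs{r}}\zeta)\big]\delta\zeta\,d^2r$; splitting $|\wh{\nabla\phi}|^2 = |\wh{\nabla_{\bs{r}}\phi}|^2+\wh{\p_z\phi}^2$ and adding the gravitational contribution $g\zeta$ reproduces precisely $\delta H/\delta\zeta$, whose negative is the collection of $dt$ terms in \eqref{eqn:CWWE_Zakharov_stochastic}. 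The same two-contribution computation for $H_i$ gives $\delta H_i/\delta\zeta = \wh{\bs{\xi}_i^{(\bs{r})}}\cdot\wh{\nabla_{\bs{r}}\phi} + \wh{\p_z\phi}(\wh{\bs{\xi}_i^{(\bs{r})}}\cdot\nabla_{\bs{r}}\zeta)$ once the $\wh{\xi_i^{(z)}}\,\wh{\p_z\phi}$ terms cancel, matching the negative of the noise coefficients in \eqref{eqn:CWWE_Zakharov_stochastic}.

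I expect the $\zeta$-variation to be the main obstacle, for two reasons. First, one must correctly account for the deformation of the full harmonic potential $\phi$ induced by moving the boundary while holding the trace $\wh\phi$ fixed; the content of Proposition \ref{prop:ZakharovVariationCorollary} is exactly the boundary condition $\wh{\delta\phi}=-\wh{\p_z\phi}\,\delta\zeta$ needed to close the integration by parts, and omitting it would make the identification spuriously fail. Second, one must check that the induced interior variation $\delta\phi$ may itself be taken harmonic, so that the bulk term proportional to $\Delta\phi$ genuinely vanishes and only the surface integral survives; this follows by differentiating in $\epsilon$ the family of boundary-value problems $\Delta\phi_\epsilon=0$. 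Once these points are handled, the remaining cancellations are purely algebraic, and the identification of the four functional derivatives with the coefficients of \eqref{eqn:stochastic_KBC_potential} and \eqref{eqn:CWWE_Zakharov_stochastic} completes the proof.
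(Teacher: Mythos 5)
Your proposal is correct, and in overall architecture it is the paper's proof: both arguments establish the theorem by computing the four variational derivatives and matching them against the coefficients of \eqref{eqn:stochastic_KBC_potential} and \eqref{eqn:CWWE_Zakharov_stochastic}, and your treatment of the two $\zeta$-variations --- the two contributions from the moving upper limit of the $z$-integral and from the induced boundary variation $\wh{\delta\phi} = -\wh{\p_z\phi}\,\delta\zeta$ of Proposition \ref{prop:ZakharovVariationCorollary}, closed by Green's identity for $H$ and by the divergence theorem with $\nabla\cdot\bs{\xi}_i = 0$ for $H_i$, including the cancellation of the $\wh{\xi_i^{(z)}}\,\wh{\p_z\phi}$ terms --- follows the paper step for step. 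The one place you genuinely diverge is $\delta H/\delta\wh\phi$: the paper first collapses the kinetic energy onto the surface via Green's first identity, giving $\frac12\int \wh\phi\,(\wh{\nabla\phi}\cdot\bs{n})\,d^2r$, and then, following Zakharov, appeals to the existence of a symmetric Green's function (the self-adjointness of the Dirichlet-to-Neumann map) to dispose of the factor $\frac12$ and read off \eqref{eqn:ZakharovVariationPhi}; you instead vary the volume integral directly, taking $\delta\phi$ to be the harmonic extension of $\delta\wh\phi$ with $\zeta$ fixed, so that a single integration by parts against $\Delta\phi = 0$ yields $\int \delta\wh\phi\,(\bs{n}\cdot\wh{\nabla\phi})\,d^2r$ at once --- more elementary and self-contained, at the price of not setting up the Dirichlet-to-Neumann machinery which the paper reuses for the Craig--Sulem form \eqref{eqn:CraigSulem_Hamiltonian}. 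You also make explicit a point the paper leaves implicit: that the induced interior variation may itself be taken harmonic (by differentiating the family of Dirichlet problems $\Delta\phi_\epsilon = 0$ in $\epsilon$), so that $H$ and $H_i$ are genuinely functionals of the pair $(\zeta,\wh\phi)$. One cosmetic caution on signs: in \eqref{eqn:CWWE_Zakharov_stochastic} as written the drift and noise terms sit on the left of the equals-zero sign, so they equal $+\delta H/\delta\zeta$ and $+\delta H_i/\delta\zeta$ respectively; the minus signs appear only once the equation is solved for $\rmd\wh\phi$, which is what the Hamiltonian form requires, and your identification is consistent with this.
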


\begin{remark}[Conserved quantities]
	In the deterministic case, the Hamiltonian $H$ is the conserved energy. In the new stochastic formulation, neither $H$ nor $H_i$ are conserved for any $i$. Indeed, this is to be expected since stochastic advection by Lie transport is designed to maintain Kelvin's circulation theorem and preserve the conservation of \emph{helicity}. Thus, the stochastic equations for the continuum, \eqref{EulerHolm} and \eqref{DxIncompressible}, do not conserve the energy of the fluid. For energy conserving stochastic equations of motion for a fluid continuum, see \cite{Holm2021} or \cite{Memin2014}.
\end{remark}

\begin{proof}
	For this to be true, we will need to demonstrate that the variational derivatives of these Hamiltonians are as follows
	\begin{align}
		\frac{\delta H}{\delta \wh\phi} &= \bs{n}\cdot \wh{\nabla \phi}
	\label{eqn:ZakharovVariationPhi}
	\,,\\
		\frac{\delta H}{\delta \zeta} &= \frac12|\wh{\nabla_{\bs{r}}\phi}|^2 - \frac12\wh{\p_z\phi}^2 + \wh{\p_z\phi}\wh{\nabla_{\bs{r}}\phi}\cdot\nabla_{\bs{r}}\zeta + g\zeta \label{eqn:ZakharovVariationZeta}
	\,,\\
		\frac{\delta H_i}{\delta\wh\phi} &= \bs{n}\cdot\wh{\bs{\xi}_i}
	\label{eqn:ZakharovVariationPhiStochastic}
	\,,\\
		\frac{\delta H_i}{\delta\zeta} &= \wh{\bs{\xi}_i^{(\bs{r})}}\cdot \wh{\nabla_{\bs{r}}\phi} + \wh{\p_z\phi}\big(\wh{\bs{\xi}_i^{(\bs{r})}}\cdot \nabla_{\bs{r}}\zeta\big)
	\label{eqn:ZakharovVariationZetaStochastic}
	\,.
	\end{align}
	
	Notice that the first two variational derivatives are akin to those found by Zakahrov \cite{Zakharov1968}, we will use the same method here. We begin with the variation of $H$ with respect to $\wh\phi$. Since the velocity potential, $\phi$, is a harmonic function, we may use Green's first identity on the kinetic energy term
	\begin{equation}
		\frac12\int\int_{-\infty}^\zeta |\nabla\phi|^2\,dz\,d^2r = \frac12\int \phi (\nabla\phi\cdot\bs{n})\,d^2r \,.
	\end{equation}
	Note that the normal, $\bs{n}$, given by equation \eqref{eqn:normal} is not a unit normal, but the factor through which it may be transformed into a unit normal also appears in the following expression for an infinitesimal region of the free surface, $ds = \sqrt{1+|\nabla_{\bs{r}}\zeta|^2}\,d^2r$. The integral on the right hand side is taken to be over the free surface since the normal component of velocity is assumed to vanish on all other boundaries. The existence of a symmetric Green's function relating $\wh\phi$ and $\wh{\nabla\phi}\cdot\bs{n}$ follows from the Dirichlet to Neumann map and, as in Zakharov \cite{Zakharov1968}, this implies the variational derivative \eqref{eqn:ZakharovVariationPhi}.

	The variational derivative of $H$ with respect to $\zeta$ is trivial for the potential energy, and for the kinetic energy follows from the approach discussed in Proposition \ref{prop:ZakharovVariationCorollary}. A variation of the kinetic energy gives
	\begin{equation*}
	\begin{aligned}
		\frac12\int\int_{-\infty}^{\zeta+\delta\zeta}|\nabla(\phi+\delta\phi)|^2 &= \frac12\int\left[|\nabla\phi|^2\right]\delta\zeta \,d^2r+ \int\int_{-\infty}^\zeta\nabla\phi\cdot\nabla\delta\phi\,dz\,d^2r \\
		\hbox{\small (by Green's second identity)} &= \frac12\int\left[|\nabla\phi|^2\right]_{z=\zeta}\delta\zeta \,d^2r + \int \left[\delta\phi(\nabla\phi\cdot\bs{n})\right]_{z=\zeta}\,d^2r \\
		\hbox{\small (by Proposition \ref{prop:ZakharovVariationCorollary}) }&= \frac12\int\left[|\nabla\phi|^2\right]_{z=\zeta}\delta\zeta \,d^2r \\
		&\qquad - \int \wh{\p_z\phi}\,\delta\zeta(\wh{\p_z\phi} - \wh{\nabla_{\bs{r}}\phi}\cdot\nabla_{\bs{r}}\zeta)\,d^2r \\
		&=  \int \left( \frac12|\wh{\nabla_{\bs{r}}\phi}|^2 - \frac12 (\wh{\p_z\phi})^2 + \wh{\p_z\phi}\wh{\nabla_{\bs{r}}\phi}\cdot\nabla_{\bs{r}}\zeta \right)\,\delta\zeta \,d^2r \,.
	\end{aligned}
	\end{equation*}
	This implies the required variational derivative \eqref{eqn:ZakharovVariationZeta}.
	
	The variational derivatives of the stochastic Hamiltonians, $H_i$, are performed similarly. Beginning with the variational derivative of $H_i$ with respect to $\wh\phi$. Rather than Green's identity, we use the divergence theorem. Noting that
	\begin{equation*}
		\nabla\cdot(\phi\bs{\xi}_i) = \bs{\xi}_i \cdot \nabla\phi + \phi\nabla\cdot\bs{\xi}_i = \bs{\xi}_i \cdot \nabla\phi\,,
	\end{equation*}
	where we have used the fact that $\bs{\xi}_i$ are divergence free. The divergence theorem implies that
	\begin{equation}
	\begin{aligned}
		H_i = \int\int_{-\infty}^\zeta \bs{\xi}_i\cdot\nabla\phi\,dz\,d^2r &= \oint_{z=\zeta} \phi (\bs{\xi}_i\cdot\bs{n})\frac{1}{\sqrt{1+|\nabla_{\bs{r}}\zeta|^2}}\,ds \\
		&= \int \phi (\bs{\xi}_i\cdot\bs{n})\,d^2r \,.
	\end{aligned}
	\end{equation}
	The justification of this is the same as for the variation of $H$ and, since $\bs{\xi}_i$ are independent of $\phi$, this immediately implies the variational derivative \eqref{eqn:ZakharovVariationPhiStochastic}.
	
	It only remains to calculate the variational derivative of $H_i$ with respect to $\zeta$. This again invokes Proposition \ref{prop:ZakharovVariationCorollary} and closely follows the deterministic case, indeed
\begin{equation*}
\begin{aligned}
	\int \int_{-\infty}^{\zeta+\delta\zeta} \bs{\xi}_i\cdot \nabla(\phi+\delta\phi) \,dz\,d^2r &= \int \wh{\bs{\xi}_i}\cdot\wh{\nabla\phi} \,\delta\zeta\,d^2r \\
	&\qquad + \int\int_\zeta^{\zeta+\delta\zeta} \bs{\xi}_i\cdot\nabla\delta\phi \,dz\,d^2r \\
	&\hspace{-20pt}= \int \wh{\bs{\xi}_i}\cdot\wh{\nabla\phi} \,\delta\zeta\,d^2r + \int \left[\delta\phi (\bs{\xi}_i\cdot\bs{n})\right]_{z=\zeta}\,d^2r \,.
\end{aligned}
\end{equation*}	
The final line of this calculation follows again from divergence theorem, since the divergence of $\bs{\xi}_i$ is zero. Continuing the calculation, we see that
\begin{equation*}
\begin{aligned}
	\int \int_{-\infty}^{\zeta+\delta\zeta} \bs{\xi}_i\cdot\nabla(\phi+\delta\phi) \,dz\,d^2r &=
 \int \wh{\bs{\xi}_i}\cdot\wh{\nabla\phi} \,\delta\zeta\,d^2r \\
	&\hspace{-20pt}\qquad - \int \wh{\p_z\phi}\,\delta\zeta(\wh{\xi_i^{(z)}} - \wh{\bs{\xi}_i^{(\bs{r})}}\cdot\nabla_{\bs{r}}\zeta)\,d^2r \\
	&\hspace{-20pt}= \int \left( \wh{\bs{\xi}_i^{(\bs{r})}}\cdot \wh{\nabla_{\bs{r}}\phi} + \wh{\p_z\phi}\big(\wh{\bs{\xi}_i^{(\bs{r})}}\cdot \nabla_{\bs{r}}\zeta\big) \right)\,\delta\zeta\,d^2r \,,
\end{aligned}
\end{equation*}
which gives our result.
\end{proof}

\begin{remark}
	Notice that if we set $\bs{\xi}_i$ to be zero, this recovers the deterministic theory exactly.
\end{remark}

\subsection{The Dirichlet to Neumann map}

We will rearrange the equations such that they are written in terms of the free surface and trace of the potential on the free surface only. To do so, we use relationships from Proposition \ref{HatChainRuleProposition} to rewrite the deterministic part of equation \eqref{eqn:CWWE_Zakharov_stochastic} as
\begin{align*}
	\frac12 |\wh{\nabla_{\bs{r}}\phi}|^2 - \frac12 \wh{\p_z\phi}^2 + \wh{\p_z\phi}(\wh{\nabla_{\bs{r}}\phi}\cdot \nabla_{\bs{r}}\zeta) + g\zeta &= \frac12|\nabla_{\bs{r}}\wh\phi - \wh{\p_z\phi}\nabla_{\bs{r}}\zeta|^2 - \frac12\wh{\p_z\phi}^2 \\
	&+ \wh{\p_z\phi}(\nabla_{\bs{r}}\wh\phi - \wh{\p_z\phi}\nabla_{\bs{r}}\zeta)\cdot\nabla_{\bs{r}}\zeta + g\zeta \\
	&\hspace{-40pt} = g\zeta +\frac{1}{2}|\nabla_{\bs{r}}\widehat\phi|^2 - \frac{1}{2}\widehat{\p_z\phi}^2(1+|\nabla_{\bs{r}}\zeta|^2)\,.
\end{align*}
The stochastic part of equation \eqref{eqn:CWWE_Zakharov_stochastic} may be rearranged as
\begin{align*}
	\wh{\bs{\xi}_i^{(\bs{r})}}\cdot \wh{\nabla_{\bs{r}}\phi} + \wh{\p_z\phi}\big(\wh{\bs{\xi}_i^{(\bs{r})}}\cdot \nabla_{\bs{r}}\zeta\big) &= \wh{\bs{\xi}_i^{(\bs{r})}}\cdot (\nabla_{\bs{r}}\wh\phi - \wh{\p_z\phi}\nabla_{\bs{r}}\zeta)  + \wh{\p_z\phi}\big(\wh{\bs{\xi}_i^{(\bs{r})}}\cdot \nabla_{\bs{r}}\zeta\big) \\
	&= \wh{\bs{\xi}_i^{(\bs{r})}}\cdot \nabla_{\bs{r}}\wh\phi \,.
\end{align*}

As in Craig and Sulem \cite{Craig1993}, both the deterministic part of equation \eqref{eqn:CWWE_Zakharov_stochastic} and the kinematic boundary condition can be written in terms of the \emph{Dirichlet to Neumann map}. This is convenient since it enables numerical integration, as well as allowing the consideration of an asymptotic expansion of the map. Given that the potential satisfies Laplace's equation in the bulk of the fluid, the map takes the Dirichlet boundary data and returns the Neumann boundary condition which corresponds to the same solution. This map therefore takes the trace of the potential, $\wh\phi$, and returns the velocity in the normal direction at the surface, $\bs{n}\cdot\wh{\bs{u}}$. The map can be written in multiple equivalent forms as
\begin{equation}
\begin{aligned}
    G(\zeta)\wh\phi  &\coloneqq (- \nabla_{\bs{r}} \zeta, 1) \cdot \wh{\,\nabla\phi\,}  = - \nabla_{\bs{r}}\zeta \cdot \wh{\nabla_{\bs{r}} \phi} + \wh{\p_z\phi}
    \\
    &= - \nabla_{\bs{r}}\zeta \cdot \nabla_{\bs{r}}\wh\phi 
    + \wh{\p_z\phi}|\nabla_{\bs{r}}\zeta|^2 + \wh{\p_z\phi} \,.
    \label{def-DNO}
\end{aligned}
\end{equation}
The stochastic kinematic boundary condition \eqref{eqn:stochastic_KBC_potential} may be rewritten as\index{kinematic boundary condition!stochastic}
\begin{equation}\label{eqn:stochastic_KBC_potential_vel_DNO}
	\rmd\zeta = G(\zeta)\wh\phi\,dt  + \sum_{i=1}^\infty \big( \wh{\xi_i^{(z)}} - \wh{\bs{\xi}_i^{(\bs{r})}}\cdot\nabla_{\bs{r}}\zeta \big) \circ dW_t^i \,.
\end{equation}
The stochastic Bernoulli boundary equation \eqref{eqn:CWWE_Zakharov_stochastic} becomes
\begin{equation}\label{eqn:CWWE_Zakharov_stochastic_DNO}
\begin{aligned}
	\rmd\wh\phi + g\zeta\,dt + \frac12 |\nabla_{\bs{r}}\widehat\phi|^2\,dt &- \frac{1}{2\big(1+|\nabla_{\bs{r}}\zeta|^2\big)}\big(G(\zeta)\widehat\phi + \nabla_{\bs{r}}\zeta \cdot \nabla_{\bs{r}}\widehat\phi \,\,\big)^2\,dt \\
	&\qquad\qquad\quad+ \sum_{i=1}^\infty \left( \wh{\bs{\xi}_i^{(\bs{r})}}\cdot \nabla_{\bs{r}}\wh\phi \right) \circ dW_t^i = 0 \,.
\end{aligned}
\end{equation}
The pair of equations \eqref{eqn:stochastic_KBC_potential_vel_DNO} and \eqref{eqn:CWWE_Zakharov_stochastic_DNO} are a closed system of SPDEs for the water wave problem, which is a stochastic generalisation of that found by Craig and Sulem \cite{Craig1993}.
As has been noted in the deterministic case, the Hamiltonian \eqref{eqn:HamiltonianZakharov} found by Zakharov may be rewritten in terms of the Dirichlet to Neumann map as
\begin{equation}\label{eqn:CraigSulem_Hamiltonian}
	H = \frac12\int \wh\phi\, G(\zeta)\wh\phi + g\zeta^2 \,d^2r \,.
\end{equation}
The equivalence of these Hamiltonians follows from applying Green's first identity to the kinetic energy term. Indeed, since $\phi$ is a harmonic function, we have
\begin{equation*}
 \frac12\int\int_{-\infty}^{\zeta} |\nabla\phi|^2\,dx\,d^2r = \frac12 \int \wh\phi (\wh{\nabla\phi}\cdot\bs{n})\,d^2r\,,
\end{equation*}
noting the relationship between the normal vector, $\bs{n}$, its associated unit normal, and the infinitesimal surface element, $ds = \sqrt{1+|\nabla_{\bs{r}}\zeta|^2}\,d^2r$, as discussed in the proof of Theorem \ref{thm:StochasticHamiltonianCWWE}.

For the Hamiltonians corresponding to the stochastic terms, $H_i$, we have that
\begin{equation}\label{eqn:CraigSulem_StochasticHamiltonian}
	H_i = \int \wh\phi (\wh{\bs{\xi}_i}\cdot\bs{n})\,d^2r \,.
\end{equation}
This follows from the divergence theorem, and can also be found in the proof of Theorem \ref{thm:StochasticHamiltonianCWWE}.

We have therefore found that our stochastic extension of the classical water wave equations can be written purely in terms of the canonically conjugate variables, $\wh\phi$ and $\zeta$. Furthermore, its Hamiltonians may also be expressed in this way.

\subsection{On the structure of the noise}

Thus far, we have been working under the assumption that the deterministic part of the transport, $\bs{u}$, is irrotational. We have made no further comment on the structure of the stochastic perturbations, $\bs{\xi}_i$, which are assumed to have the same divergence-free form as in the Euler equations. Whilst this means that the large scale flow is irrotational, the whole dynamical portrait encompasses small scale stochastic motions which may have nonzero vorticity. This is desirable, since the lack of vorticity in the deterministic picture is a significant limiting factor.

If we make a further assumption that the noise terms are also irrotational, and each can be written in terms of a potential as
\begin{equation*}
	\bs{\xi}_i = \nabla\varphi_i \,,
\end{equation*}
then we see that the stochastic terms can, too, be written in terms of the Dirichlet to Neumann operator. Indeed, the stochastic terms in equation \eqref{eqn:CWWE_Zakharov_stochastic_DNO} become
\begin{align*}
	\wh{\bs{\xi}_i^{(\bs{r})}}\cdot \nabla_{\bs{r}}\wh\phi &= \wh{\nabla_{\bs{r}}\varphi_i}\cdot \nabla_{\bs{r}}\wh\phi \\
	&=\nabla_{\bs{r}}\wh\phi\cdot(\nabla_{\bs{r}}\wh\varphi_i - \wh{\p_z\varphi_i}\nabla_{\bs{r}}\zeta) \\
	&\hspace{-50pt}= \nabla_{\bs{r}}\wh\phi\cdot\nabla_{\bs{r}}\wh\varphi_i - \frac{\nabla_{\bs{r}}\wh\phi\cdot\nabla_{\bs{r}}\zeta}{1+|\nabla_{\bs{r}}\zeta|^2}\big(G(\zeta)\wh\varphi_i - \nabla_{\bs{r}}\wh\varphi_i\cdot\nabla_{\bs{r}}\zeta \big) \,.
\end{align*}
The stochastic classical water wave equation \eqref{eqn:CWWE_Zakharov_stochastic_DNO} can therefore be rewritten, fully in terms of the Dirichlet to Neumann map, as
\begin{equation}\label{eqn:CWWE_CraigSulem_stochastic}
\begin{aligned}
	\rmd\wh\phi  &+ g\zeta\,dt + \frac12 |\nabla_{\bs{r}}\widehat\phi|^2\,dt - \frac{1}{2\big(1+|\nabla_{\bs{r}}\zeta|^2\big)}\big(G(\zeta)\widehat\phi + \nabla_{\bs{r}}\zeta \cdot \nabla_{\bs{r}}\widehat\phi \,\,\big)^2 \,dt \\
	&+ \sum_{i=1}^\infty \bigg( \nabla_{\bs{r}}\wh\phi\cdot\nabla_{\bs{r}}\wh\varphi_i - \frac{\nabla_{\bs{r}}\wh\phi\cdot\nabla_{\bs{r}}\zeta}{1+|\nabla_{\bs{r}}\zeta|^2}\big(G(\zeta)\wh\varphi_i - \nabla_{\bs{r}}\wh\varphi_i\cdot\nabla_{\bs{r}}\zeta \big)\bigg) \circ dW_t^i = 0 \,.
\end{aligned}
\end{equation}
Similarly, the kinematic boundary condition \eqref{eqn:stochastic_KBC} is\index{kinematic boundary condition!stochastic}
\begin{equation}
	\rmd\zeta = G(\zeta)\wh\phi\,dt + \sum_{i=1}^\infty G(\zeta)\wh\varphi_i\circ dW_t^i \,.
\end{equation}
To further illustrate that this stochastic perturbation of the water wave problem preserves the geometric structure of the deterministic case, we note that the Hamiltonians, $H_i$, defined in equation \eqref{eqn:StochasticHamiltonianZakharov} can be rewritten in terms of the Dirichlet to Neumann map in the same manner as the deterministic Hamiltonian \eqref{eqn:HamiltonianZakharov} was transformed into an equivalent form \eqref{eqn:CraigSulem_Hamiltonian}. Indeed, again using Green's first identity we have
\begin{equation}
	H_i = \int \wh\phi \,G(\zeta)\wh\varphi_i\,d^2r \,.
\end{equation} 

Due to the properties of the Dirichlet-to-Neumann map, this may be beneficial in some cases. It should be noted that this further assumption on the structure of the noise is not required for a Hamiltonian structure to exist, and making this assumption destroys the hope of vorticity within the stochastic terms. This should be considered carefully since if we are calibrating the stochastic terms using data, then it is unlikely that these will be irrotational.

\section{Concluding remarks}

There is a rich geometric structure present in fluid mechanics, and in this problem in particular. The derivation of classical water wave theory begins with a fluid equation with a geometric structure, involves manipulation of the system at the level of the equations, and produces equations which are a Hamiltonian system. Here, we have introduced stochastic terms into this field of study which preserve both of these geometric structures, that of the fluid by design, and that of the waves by consequence. This was achieved by making the simple assumption that advected quantities evolve stochastically in time.

The work presented in this paper opens a plethora of further research questions. In particular, the stochastic perturbation was designed to enable the calibration of fluid equations to data. Is it possible to achieve this for wave dynamics using these equations, using data gathered either numerically or experimentally? Moreover, the well-posedness properties of the deterministic classical water wave theory are known \cite{Lannes2005}. This raises the question of whether the inclusion of stochastic terms preserves these properties, diminishes them, or enhances them. When considering the variational structure of the wave dynamics, we considered Zakharov's Hamiltonian formulation. An interesting further question is whether this approach is compatible with Luke's variational principle \cite{Luke1967}. Equally, there are alternative methodologies for the addition of noise in the variational principle, which could also be considered in a future work.

\section*{Acknowledgements}

I am grateful for the guidance of my friends, and doctoral advisors, Dan Crisan and Darryl Holm. I also acknowledge the many conversations I have shared with Ruiao Hu on wave motion and stochastic variational principles. I am thankful for the support of the EPSRC Centre for Doctoral Training in the Mathematics of Planet Earth (Grant No. EP/L016613/1), as well as the ongoing support from the European Research Council (ERC) Synergy grant STUOD - DLV-856408.

%%%%%%%%%%%%%%%%%%%%%%%%%%%%%%%%%%%%%%%%%%%%%%%%%%%%%%%%%%%%%%%%%%%%%%%%%%%%%%

\end{document}